\newtheorem{theorem}{Theorem}
\newtheorem{lemma}[theorem]{Lemma}
\newtheorem{construction}{Construction}
\newtheorem{conjecture}{Conjecture}
\begin{document}
\title{Non-overlapping codes}
\author{Simon R.\ Blackburn\\
Department of Mathematics\\
Royal Holloway, University of London\\
Egham, Surrey TW20 0EX\\
United Kingdom}
\maketitle

\begin{abstract}
  We say that a $q$-ary length $n$ code is \emph{non-overlapping} if
  the set of non-trivial prefixes of codewords and the set of
  non-trivial suffices of codewords are disjoint. These codes were
  first studied by Levenshtein in 1964, motivated by applications in
  synchronisation. More recently these codes were independently
  invented (under the name \emph{cross-bifix-free} codes) by Baji\'c
  and Stojanovi\'c.

  We provide a simple construction for a class of non-overlapping
  codes which has optimal cardinality whenever $n$
  divides~$q$. Moreover, for all parameters $n$ and $q$ we show that a
  code from this class is close to optimal, in the sense that it has
  cardinality within a constant factor of an upper bound due to
  Levenshtein from 1970. Previous constructions have cardinality
  within a constant factor of the upper bound only when $q$ is fixed.

  Chee, Kiah, Purkayastha and Wang showed that a $q$-ary length~$n$
  non-overlapping code contains at most $q^n/(2n-1)$ codewords; this
  bound is weaker than the Levenshtein bound. Their proof appealed to
  the application in synchronisation: we provide a direct combinatorial argument to
  establish the bound of Chee \emph{et al}.

  We also consider codes of short length, finding the leading term of
  the maximal cardinality of a non-overlapping code when $n$ is fixed
  and $q\rightarrow \infty$. The largest cardinality of
  non-overlapping codes of lengths~$3$ or less is determined exactly.
\end{abstract}

\section{Introduction}
\label{sec:introduction}

Let $u$ and $v$ be two words (not necessarily distinct) of length $n$,
over a finite alphabet $F$ of cardinality $q$. We say that $u$ and $v$
are \emph{overlapping} if a non-empty proper prefix of $u$ is equal to
a non-empty proper suffix of $v$, or if a non-empty proper prefix of
$v$ is equal to a non-empty proper suffix of $u$. So, for example, the
binary words $00000$ and $01111$ are overlapping; so are the words
$10001$ and $11110$. However, the words $11111$ and $01110$ are
non-overlapping.

We say that a code $C\subseteq F^n$ is \emph{non-overlapping} if for
all (not necessarily distinct) $u,v\in C$,  the words $u$ and $v$ are
non-overlapping. The following is an example of a non-overlapping
binary code of length $6$ containing 3 codewords:
\[
C=\{001101,001011,001111\}.
\]
We write $C(n,q)$ for the maximum number of codewords in a
$q$-ary non-overlapping code of length $n$. It is easy to see that
$C(1,q)=q$. From now on, to avoid trivialities, we always assume that
$n\geq 2$. 

Non-overlapping codes were introduced by
Levenshtein~\cite{Levenshtein64} in 1964 (under the name `strongly
regular code'; in later papers~\cite{Levenshtein70,Levenshtein04} he
refers to `codes without overlaps'). These codes are interesting for
synchronisation applications: they are comma-free codes with the
strong property that an error in a codeword or in a state of a certain
decoding automaton does not propagate into incorrect decoding of
subsequent codewords.

Inspired by the use of distributed sequences in frame synchronisation
applications by van Wijngaarden and Willink~\cite{WijngaardenWillink},
Baji\'c and Stojanovi\'c~\cite{BajicStojanovic} recently independently
rediscovered non-overlapping codes (using the term cross-bifix-free). See
also~\cite{Bajic,BajicStojanovicLindner,BilottaGrazzini,BilottaPergola,CheeKiah12,WijngaardenWillink}
for recent papers studying non-overlapping (cross-bifix-free) codes and
their applications to synchronisation.

Levenshtein~\cite{Levenshtein70,Levenshtein04} provides a construction
for non-overlapping codes that has good performance when $q=2$, and
attributes this class of codes to Gilbert~\cite{Gilbert60}; see
Construction~\ref{con:Chee} in Section~\ref{sec:construction}
below. Chee, Kiah, Purkayastha and Wang~\cite{CheeKiah12} rediscovered
this construction, and verified by computer search that it was optimal
(in the sense of producing non-overlapping codes of largest possible
cardinality) for $q=2$ and $n\leq 16$, except when
$n=9$. Levenshtein~\cite{Levenshtein04} suggests that the main
question in the area is to prove or contradict the question of whether
this construction always produces optimal non-overlapping codes. This
question is still open for binary codes (for $n\not=9$). The main aim
of this paper is to provide a strongly negative answer to this
question, by giving a simple generalisation of the construction (see
Construction~\ref{con:new} in Section~\ref{sec:construction} below)
that performs much better when $q$ is large. This new construction is
almost optimal in the sense that its cardinality is within a constant
factor of an upper bound on non-overlapping codes due to
Levenshtein~\cite{Levenshtein70}. Previously, constructions with this
property were only known when $q$ is fixed. When $n$ divides $q$ the
cardinality of the new construction meets Levenshtein's upper bound
and so is optimal.

A second aim of the paper is to simplify some of the arguments in the literature on non-overlapping codes, either by providing purely combinatorial arguments that do not rely on knowledge of any particular application of these codes, or by providing (possibly weaker) bounds on code size that are substantially easier to prove. As an example of the former, we reprove an upper bound on $C(n,q)$ due to Chee,
Kiah, Purkayastha and Wang~\cite{CheeKiah12} using a simple combinatorial argument; their proof required an understanding of Bajic et al.'s analysis~\cite{BajicStojanovicLindner} of the variance of synchronisation time when a non-overlapping code is used in a particular application. As an example of the latter, we provide a lower bound on $C(n,q)$ when $q$ is fixed and $n\rightarrow\infty$ that is weaker than a bound due to Gilbert~\cite{Gilbert60} and Levenshtein~\cite{Levenshtein64}, but avoids the need to know any analytic combinatorics.

The remainder of the paper is structured as follows.

In Section~\ref{sec:upper}, we recap two upper bounds on the
cardinality of a non-overlapping code. The first bound, due to Chee,
Kiah, Purkayastha and Wang~\cite{CheeKiah12} states that $C(n,q)\leq
q^n/(2n-1)$. As mentioned above, Chee \emph{et al.}\ established their bound by appealing
to the application in synchronisation (deriving the bound from the
fact that a certain variance must be positive). We provide a direct
combinatorial proof of this bound. (Indeed, the combinatorial
derivation allows us to improve the bound slightly to a strict
inequality.) The upper bound due to Levenshtein~\cite{Levenshtein70}
is always better than the bound due to Chee \emph{et al.}, but
requires a little analytic combinatorics: we include this bound and
its beautiful proof for completeness.

In Section~\ref{sec:construction} we turn to constructions for
non-overlapping codes. We describe the construction due to Levenshtein
(rediscovered by Chee \emph{et al.})  and provide a simple
argument to show that this construction has cardinality within a
constant of Levenshtein's bound when $q$ is fixed. We then describe a
generalisation of the construction that performs better when $q$ is
large, and show that the codes that are produced are optimal when $n$ divides $q$.

In Section~\ref{sec:small_length} we consider non-overlapping codes of
small length. We determine $C(n,q)$ when $n\leq 3$ exactly, and we
calculate $\lim_{q\rightarrow\infty} C(n,q)/q^n$ for any fixed $q$.

Finally, in Section~\ref{sec:general_parameters}, we show that the new
construction in Section~\ref{sec:construction} produces codes that are
almost optimal for all parameters $n$ and $q$.

\section{Two upper bounds}
\label{sec:upper}

We first provide a direct combinatorial proof of the following theorem, that
slightly strengthens the bound due to Chee \emph{et al.}~\cite{CheeKiah12}.

\begin{theorem}
\label{thm:upper}
  Let $n$ and $q$ be integers with $n\geq 2$ and $q\geq 2$. Let
  $C(n,q)$ be the number of codewords in the largest non-overlapping
  $q$-ary code of length~$n$. Then
\[
C(n,q)<\frac{q^n}{2n-1}.
\]
\end{theorem}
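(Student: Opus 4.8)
The plan is to prove the equivalent statement $(2n-1)\,C(n,q)<q^n$ by a double-counting argument over \emph{cyclic} windows of length $2n-1$. Write $F$ for the alphabet, fix an optimal non-overlapping code $C\subseteq F^n$, and for a word $w=w_0w_1\cdots w_{2n-2}\in F^{2n-1}$ and an index $i\in\{0,1,\dots,2n-2\}$ say that a codeword $c$ \emph{occurs cyclically at position $i$ in $w$} if $w_iw_{i+1}\cdots w_{i+n-1}=c$, with all subscripts read modulo $2n-1$. I would count the triples $(w,c,i)$ for which $c$ occurs cyclically at position $i$ in $w$. The point of passing to cyclic windows is that a purely linear window of length $2n-1$ only forces at most one \emph{full} occurrence and yields the weaker bound $q^n/n$; the wraparound is exactly what supplies the extra factor.

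The engine of the proof is the claim that each $w\in F^{2n-1}$ admits at most one such triple. Suppose instead that codewords $c$ and $c'$ occur cyclically at distinct positions $i$ and $j$, and set $d=(j-i)\bmod(2n-1)$, so $1\le d\le 2n-2$. Since the two length-$n$ cyclic arcs occupy $2n>2n-1$ cells between them, they must overlap. When $1\le d\le n-1$ the $n-d$ cells $i+d,\dots,i+n-1$ lie in both arcs, and reading $w$ there exhibits a suffix of $c$ of length $n-d$ equal to a prefix of $c'$ of the same length; as $1\le n-d\le n-1$ this is a non-trivial prefix/suffix coincidence, contradicting that $C$ is non-overlapping (take $u=c'$ and $v=c$, allowing $c=c'$, which simultaneously rules out self-overlaps). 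The complementary range $n\le d\le 2n-2$ is handled identically after replacing $d$ by $(2n-1)-d$ and swapping the roles of $c$ and $c'$.

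Granting the claim, I evaluate the count two ways. On one hand, fixing $c\in C$ and a position $i$ determines the $n$ cyclic cells occupied by $c$ and leaves the remaining $n-1$ cells free, so there are exactly $q^{n-1}$ words $w$ in each such family; summing over the $2n-1$ positions and over $C$ gives $(2n-1)\,|C|\,q^{n-1}$ triples. On the other hand, by the claim the total number of triples is at most $|F^{2n-1}|=q^{2n-1}$, which already yields $(2n-1)\,|C|\le q^n$. To upgrade to the strict inequality I observe that the constant word $0^{2n-1}$ contributes no triple, since its only cyclic factor $0^n$ is bordered and so cannot be a codeword; hence the number of triples is strictly less than $q^{2n-1}$, giving $(2n-1)\,C(n,q)<q^n$. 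The main obstacle is the claim, and within it the care needed to confirm that the overlapping cyclic cells genuinely correspond to an honest prefix of one codeword and an honest suffix of the other despite the wraparound; the rest is routine bookkeeping.
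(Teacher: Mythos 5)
Your proposal is correct and follows essentially the same argument as the paper: a double count of cyclic occurrences of codewords in windows of length $2n-1$, the observation that the non-overlapping property forces at most one occurrence per window, and a constant word to make the inequality strict. The only cosmetic differences are that you spell out the cyclic-overlap claim (which the paper merely asserts) and use a single constant word $0^{2n-1}$ for strictness where the paper excludes all $q$ constant words, yielding $|X|\leq q^{2n-1}-q$; both suffice.
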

\begin{proof}
  Let $C$ be a non-overlapping code of length $n$ over an alphabet $F$
  with $|F|=q$. Consider the set $X$ of pairs $(w,i)$ where $w\in
  F^{2n-1}$, $i\in\{1,2,\ldots,2n-1\}$ and the (cyclic) subword of $w$
  starting at position $i$ lies in $C$. So, for example, if $C$ is
  the code in the introduction then $(01111110011,8)\in X$.

  We see that $|X|=(2n-1)|C|q^{n-1}$, since there are $2n-1$ choices
  for $i$, then $|C|$ choices for the codeword starting in the $i$th
  position of $w$, then $q^{n-1}$ choices for the remaining positions
  in $w$.

  Since $C$ is non-overlapping, two codewords cannot appear as
  distinct cyclic subwords of any word $w$ of length $2n-1$. Thus,
  for any $w\in F^{2n-1}$ there is at most one choice for an integer $i$ such
  that $(w,i)\in X$. Moreover, no subword of any of the $q$ constant
  words $w$ of length $2n-1$ can appear as a codeword in a
  non-overlapping code. So $|X|\leq q^{2n-1}-q<q^{2n-1}$.

  The theorem now follows from the inequality
\[
(2n-1)|C|q^{n-1}\leq|X|< q^{2n-1}.\qedhere
\]
\end{proof}

The above theorem has the advantage of an elementary proof. The
following bound, due to Levenshtein~\cite{Levenshtein70}, is
stronger that the bound above, but requires knowledge of some analytic
combinatorics. The proof is short and beautiful, so it is included for
completeness.

\begin{theorem}
\label{thm:Levupper}
Let $n$ and $q$ be integers with $n\geq 2$ and $q\geq 2$. Let
  $C(n,q)$ be the number of codewords in the largest non-overlapping
  $q$-ary code of length~$n$. Then
\[
C(n,q)\leq\frac{1}{n}\left(\frac{n-1}{n}\right)^{n-1}q^n.
\]
\end{theorem}
\begin{proof}
Let $C$ be a $q$-ary non-overlapping code of length $n$. We say that a (finite)
$q$-ary word $s$ is \emph{$C$-free} if no subword of $s$ lies in
$C$. In other words, $s$ is $C$-free if $s$ cannot be written in the form $ucv$ where
$c\in C$, and $u$ and $v$ are (possibly empty) $q$-ary words. Note
that all sequences of length $i$ with $i<n$ are $C$-free.

We write $b_i$ for the number of $C$-free $q$-ary sequences of
length~$i$. We have $b_i=q^i$ when $i<n$.

Let $i\geq n$. Let $P$ be the set of $q$-ary words of length $i$ that
begin with a $C$-free word of length $i-1$. Let $Q$ be the set of
$C$-free words of length $i$. Let $T$ be the set of words beginning
with a $C$-free word of length $i-n$, and ending in a codeword. We
have $|P|=qb_{i-1}$, $|Q|=b_i$ and $|T|=|C|b_{i-n}$. For any code $C$,
$P\setminus Q\subseteq T$. However, the fact that $C$ is non-overlapping
implies that $P\setminus Q=T$. Thus $qb_{i-1}-b_i= |C|b_{i-n}$. This
recurrence, together with the facts that $b_0=1$ and $b_i=qb_{i-1}$
for $i<n$, imply that
\[
\sum_{i=0}^\infty b_iz^i = \frac{1}{1-qz+|C|z^n}.
\]

The radius of convergence $R$ of the power series above is finite, and
$R^{-1}$ is equal to the largest modulus of a root of the polynomial
$f(z):=1-qz+|C|z^n$. Since the coefficients $b_i$ in the power series
are all non-negative, Pringsheim's Theorem (see~\cite[Theorem IV.6,
page 240]{FlajoletSedgewick}, for example) implies that $R^{-1}$ is a root
of $f$. In particular, $f$ has a root on the positive real
axis.

We note that $f(0)>0$, and a short calculation shows that $f(z)$ has a unique
minima at $z=z_0$, where
\[
z_0=\left(\frac{q}{n|C|}\right)^{1/(n-1)}.
\]
Since $f$ has a root on the positive real axis, we must have
$f(z_0)\leq 0$. More explicitly, we find that
\[
1-q
\left(\frac{q}{n|C|}\right)^{1/(n-1)}+|C|\left(\frac{q}{n|C|}\right)^{n/(n-1)}\leq
0
\]
and so
\[
|C|^{1/(n-1)}-q\left(\frac{q}{n}\right)^{1/(n-1)}+\frac{q}{n}\left(\frac{q}{n}\right)^{1/(n-1)}\leq 0.
\]
Rearranging this inequality gives us the bound we require.
\end{proof}

\section{Constructions of non-overlapping codes}
\label{sec:construction}

Let $F=\{0,1,\ldots, q-1\}$. The following class of non-overlapping
codes of length $n$ over $F$ was proposed by
Levenshtein~\cite{Levenshtein64,Levenshtein70};
Gilbert~\cite{Gilbert60} also considered this class of codes in the context of
synchronisation applications. The codes were recently rediscovered by Chee \emph{et al.}~\cite{CheeKiah12}.

\begin{construction}[Levenshtein~\cite{Levenshtein64,Levenshtein70};
Gilbert~\cite{Gilbert60}; Chee \emph{et al.}~\cite{CheeKiah12}]
\label{con:Chee}
Let $k$ be an integer such that $1\leq k\leq n-1$. Let $C$ be the set of all
words $c\in F^n$ such that:
\begin{itemize}
\item $c_i=0$ for $1\leq i\leq k$ (so all codewords start with $k$
  zeroes);
\item $c_{k+1}\not=0$, and $c_n\not=0$;
\item the sequence $c_{k+2},c_{k+3},\ldots,c_{n-1}$ does not contain
  $k$ consecutive zeroes.
\end{itemize}
Then $C$ is a non-overlapping code. 
\end{construction}

The binary non-overlapping code $C$ of length $6$ given in the introduction is an instance of Construction~\ref{con:Chee} with $k=2$.

It is not hard to see that the construction above is indeed a
non-overlapping code. Chee \emph{et al.}\  show that the construction is
already good for small parameters. Indeed, they show that for binary
codes, Construction~\ref{con:Chee} (with the best choice of $k$)
achieves the best possible code size whenever $n\leq 16$ and $n\neq 9$.

It less clear how to choose $k$ in general so that $C$ is
as large as possible, and what the resulting asymptotic size of the
code is. However, Gilbert~\cite{Gilbert60} and
Levenshtein~\cite{Levenshtein64} show that when $q$ is fixed, and $k$
is chosen appropriately (as a function of $n$), we have that
\[
|C|\gtrsim \frac{q-1}{qe} \frac{q^n}{n}
\]
where $e$ is the base of the natural logarithm, and
$n\rightarrow\infty$ over the subsequence $n=(q^i-1)/(q-1)$. (See also Chee
\emph{et al.}~\cite{CheeKiah12}.) This lower bound on $C(n,q)$ shows that
Theorems~\ref{thm:upper} and~\ref{thm:Levupper} are tight to within a
constant factor when $q$ is fixed. The following elementary
lemma is also sufficient to establish this. The lemma is included because its proof is simpler than Levenshtein's lower bound: it avoids the use of any analytic combinatorics. Note however that the lower bound of the lemma is substantially weaker than the bound of Levenshtein~\cite{Levenshtein64} if $q$ is allowed to grow. 

\begin{lemma}
\label{lem:easy}
Let $q$ be a fixed integer, $q\geq 2$. Then the codes in
Construction~\ref{con:Chee} show that
\[
\liminf_{n\rightarrow\infty}C(n,q)/(q^n/n)\geq \frac{(q-1)^2(2q-1)}{4q^4},
\]
\end{lemma}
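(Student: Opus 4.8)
The plan is to count the codewords of Construction~\ref{con:Chee} directly and then choose $k$ to make the ratio $|C|/(q^n/n)$ bounded below. Since the first $k$ symbols are forced to be $0$, positions $k+1$ and $n$ each admit $q-1$ nonzero choices, and the middle block $c_{k+2},\dots,c_{n-1}$ (of length $m:=n-k-2$) may be any word over $F$ avoiding $k$ consecutive zeroes, we have $|C|=(q-1)^2 Z$, where $Z$ denotes the number of words in $F^m$ with no run of $k$ consecutive zeroes. These separations into independent blocks work precisely because $c_{k+1}\neq 0$ and $c_n\neq 0$ prevent a forbidden run from leaking across the boundaries.

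Second, I would lower-bound $Z$ by a crude union bound, replacing the generating-function machinery of Chee \emph{et al.}: a word of $F^m$ violates the condition only if a block of $k$ zeroes begins at one of the $m-k+1$ admissible starting positions, and for each fixed starting position there are exactly $q^{m-k}$ such words. Hence $Z \ge q^m-(m-k+1)q^{m-k}\ge q^m\bigl(1-nq^{-k}\bigr)$, where the last step uses $m-k+1=n-2k-1\le n$.

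Third comes the choice of $k$. Writing $t:=nq^{-k}$, the previous two steps give $|C|/(q^n/n)\ge \tfrac{(q-1)^2}{q^2}\,t(1-t)$, so I want $t$ bounded away from $0$ and $1$. Taking $k$ to be the least integer with $q^k\ge 2n$ (which satisfies $1\le k\le n-1$ once $n$ is large) forces $q^{k-1}<2n$, hence $\tfrac{1}{2q}<t\le\tfrac12$. On this interval $t(1-t)$ is increasing, so $t(1-t)>\tfrac{2q-1}{4q^2}$; thus $|C|/(q^n/n)>\tfrac{(q-1)^2(2q-1)}{4q^4}$ for all large $n$, which gives the claimed liminf bound.

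The routine steps are the codeword count and the union bound; the one place that needs care is the third step, where the integrality of $k$ means $t=nq^{-k}$ cannot be pinned to the optimal value $\tfrac12$ but only confined to the window $(\tfrac1{2q},\tfrac12]$. This window, together with the monotonicity of $t(1-t)$ on it, is exactly what produces the factor $\tfrac{2q-1}{4q^2}$ (rather than the optimum $\tfrac14$), and explains why the bound is weaker than that of Chee \emph{et al.} I would also verify the edge conditions — that $m\ge 0$ and that positions $k+1$ and $n$ are distinct — but these hold automatically since $k$ grows like $\log_q n$ while $n\to\infty$.
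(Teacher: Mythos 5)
Your proof is correct and follows essentially the same route as the paper: the same factorisation $|C|=(q-1)^2 Z$, the same union bound on zero-run-free words, and the same choice of $k$ (your ``least integer with $q^k\ge 2n$'' is exactly the paper's $k=\log_q(2n)+\delta$ with $\delta\ge 0$), with your substitution $t=nq^{-k}$ being only a notational repackaging of the paper's maximisation of $q^{-k}-nq^{-2k}$. Your explicit window $t\in\bigl(\tfrac{1}{2q},\tfrac12\bigr]$ and the monotonicity of $t(1-t)$ make transparent the step the paper leaves as a remark, and the constant $\tfrac{(q-1)^2(2q-1)}{4q^4}$ emerges identically.
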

\begin{proof}
We begin by claiming that when $2k\leq n-2$ the number of $q$-ary
sequences of length $n-k-2$ containing no $k$ consecutive zeros is at
least
\[
q^{n-k-2}-(n-2k-1)q^{n-2k-2}.
\]
To see this, note that any sequence that fails the condition of
containing no $k$ consecutive sequences of zeroes must contain $k$
consecutive zeros starting at some position $i$, where $1\leq i\leq
n-k-2-(k-1)$. Since there are $n-2k-1$ possibilities for $i$, and
$q^{n-2k-2}$ sequences containing $k$ zeros starting at position $i$,
our claim follows. Thus, if $C$ is the non-overlapping code in
Construction~\ref{con:Chee},
\[
|C|\geq
(q-1)^2(q^{n-k-2}-nq^{n-2k-2})=\left(\frac{q-1}{q}\right)^2q^n(q^{-k}-nq^{-2k}).
\]
The function $q^{-k}-nq^{-2k}$ is maximised when
$k=\log_q(2n)+\delta$, where $\delta$ is chosen so that $|\delta|<1$
and $k$ is an integer. In this case, the value of $q^{-k}-nq^{-2k}$ is
bounded below by $(2q-1)/(4nq^2)$ (this can be shown by always taking
$\delta$ to be non-negative). Thus
\[
|C|\geq \left(\frac{(q-1)^2(2q-1)}{4nq^4}\right)q^n.\qedhere
\]
\end{proof}

When the alphabet size $q$ is much larger than the length $n$,
Construction~\ref{con:Chee} produces codes that are much smaller than
the upper bound in Theorem~\ref{thm:Levupper}. The following
generalisation of Construction~\ref{con:Chee} does not have this
drawback; indeed the construction often produces optimal
non-overlapping codes. We discuss this issue further later in this
section, and in Sections~\ref{sec:small_length}
and~\ref{sec:general_parameters} below.

Let $S\subseteq F^k$. As in the proof of Theorem~\ref{thm:Levupper},
we say that a word $x_1x_2\cdots x_r\in F^r$ is \emph{$S$-free} if
$r<k$, or if $r\geq k$ and $x_ix_{i+1}\cdots x_{i+k-1}\not\in S$ for
all $i\in\{1,2,\ldots,r-k+1\}$.

\begin{construction}
\label{con:new}
Let $k$ and $\ell$ be such that $1\leq k\leq n-1$ and $1\leq \ell\leq
q-1$. Let $F=I\cup J$ be a partition of a set $F$ of cardinality $q$
into two parts $I$ and $J$ of cardinalities $\ell$ and $q-\ell$
respectively. Let $S\subseteq I^k\subseteq F^k$. Let $C$ be the set of all words $c\in F^n$ such that:
\begin{itemize}
\item $c_1c_2\cdots c_k\in S$;
\item $c_{k+1}\in J$, and $c_n\in J$;
\item the word $c_{k+2},c_{k+3},\ldots,c_{n-1}$ is $S$-free.
\end{itemize}
Then $C$ is a non-overlapping code. 
\end{construction}

For example, suppose that $n=6$, $\ell=2$, $F=I\cup J=\{0,1\}\cup\{2\}$, $k=2$ and $S=\{00,01,10\}$. Then
\begin{align*}
C=\{&002022,002112,002122,002122,002202,002212,\\
&012022,012112,012122,012122,012202,012212,\\
&102022,102112,102122,102122,102202,102212\}.
\end{align*}

It is easy to see that Construction~\ref{con:Chee} is the special case
of Construction~\ref{con:new} with $\ell=1$, $I=\{0\}$ and
$S=\{0^k\}$.

The case of Construction~\ref{con:new} when $k=n-1$ and
$S=I^k$ is of special interest, as it produces optimal codes for an
infinite collection of parameters. In this case, $C$ is the set of words of
length $n$ whose first $n-1$ components lie in $I$, and whose final
component lies in $J$. If $n$ divides $q$, we may choose $I$ and $J$
to be such that $|I|=((n-1)/n)q$ and $|J|=(1/n)q$, so
\[
|C|=|I|^{n-1}|J|=\frac{1}{n}\left(\frac{n-1}{n}\right)^{n-1}q^n.
\]
Combining this observation with Theorem~\ref{thm:Levupper}, we see that
the following theorem holds.
\begin{theorem}
\label{thm:good_parameters}
Let $n$ and $q$ be positive integers such that $n\geq 2$ and $q\geq 2$. Let the largest non-overlapping code have cardinality
$C(n,q)$. When $n$ divides $q$,
\[
C(n,q)=\frac{1}{n}\left(\frac{n-1}{n}\right)^{n-1}q^n.
\]
Moreover, Construction~\ref{con:new} provides codes of cardinality $C(n,q)$.
\end{theorem}

It seems surprising that such a simple construction produces optimal non-overlapping codes for a wide range of parameters. One way of providing some intuition about this is as follows.  When we heavily restrict the allowed short suffixes of a code (by mandating that last symbol of each codeword lies in a small set $J$), we can make the code non-overlapping by imposing a very mild restriction on the majority of each codeword (that the first $n-1$ symbols lie in a large set $I$). However, optimal non-overlapping codes are very large, and so we cannot restrict the suffices that appear in such a code much: this means $J$ cannot be too small.

\section{Non-overlapping codes of small length}
\label{sec:small_length}

This section considers non-overlapping codes of fixed length $n$, when the
alphabet size $q$ becomes large. In this situation,
Construction~\ref{con:Chee} produces codes that are much smaller than
the upper bound in Theorem~\ref{thm:Levupper}. To see this, note that
there are at most $q^{n-k}$ codewords in a code $C$ from
Construction~\ref{con:Chee}, since the first $k$ components of any
codeword are fixed. So, since $k$ is positive, $|C|\leq q^{n-1}$ and
therefore $|C|/(q^n/n)\leq n/q$.

We saw in Section~\ref{sec:construction} that the codes given by
Construction~\ref{con:new} are optimal whenever $n$ divides $q$. The next theorem shows that the
Construction~\ref{con:new} is close to optimal when $q$ is large, even
when $n$ does not divide $q$.

\begin{theorem}
\label{thm:n_fixed}
Let $n$ be a fixed positive integer, $n\geq 2$. Then
\[
\liminf_{q\rightarrow\infty} C(n,q)/q^n=\frac{1}{n}
\left(\frac{n-1}{n}\right)^{n-1}.
\]
\end{theorem}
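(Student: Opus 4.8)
The plan is to apply Construction~\ref{con:new} with the simplest non-trivial choice of parameters, namely $k=1$ together with $S=I$, and then to optimise over the single remaining free parameter $\ell=|I|$. With $k=1$, the requirement that $c_{k+2}\cdots c_{n-1}$ be $S$-free reduces to asking that each of $c_3,\ldots,c_{n-1}$ avoid $S$; and the choice $S=I$ gives $F\setminus S=J$, so that in fact every one of $c_2,c_3,\ldots,c_n$ is forced to lie in $J$. Checking that this choice satisfies the hypotheses of Construction~\ref{con:new} (in particular $S=I\subseteq I^{k}=I$) is immediate, so the resulting $C$ is a genuine non-overlapping code.

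First I would count $C$ directly, position by position: there are $|S|=\ell$ choices for $c_1$, and $|J|=q-\ell$ choices independently for each of the remaining $n-1$ positions $c_2,\ldots,c_n$. This yields
\[
|C|=\ell\,(q-\ell)^{n-1}.
\]
I would then optimise the right-hand side over integers $\ell$ with $1\le\ell\le q-1$. Treating $\ell$ as a real variable, the function $\ell(q-\ell)^{n-1}$ has a unique interior maximum at $\ell=q/n$, where its value equals $(q/n)(q-q/n)^{n-1}=(q^{n}/n)\bigl((n-1)/n\bigr)^{n-1}$.

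Setting $\ell=\lfloor q/n\rfloor$ (which lies in the admissible range once $q\ge n$), I would write
\[
\frac{C(n,q)}{q^{n}/n}\;\geq\;\frac{n\,\ell\,(q-\ell)^{n-1}}{q^{n}}\;=\;n\cdot\frac{\ell}{q}\Bigl(1-\frac{\ell}{q}\Bigr)^{n-1},
\]
and let $q\to\infty$. Since $\ell/q\to 1/n$, the right-hand side tends to $n\cdot(1/n)\bigl(1-1/n\bigr)^{n-1}=\bigl((n-1)/n\bigr)^{n-1}$, giving the claimed lower bound on the liminf.

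The argument is essentially computational, so I do not anticipate a serious obstacle. The only points needing care are the rounding of $\ell$ to an integer---harmless because we only require a liminf as $q\to\infty$, and $\lfloor q/n\rfloor/q-1/n\to 0$---and the small-length boundary cases $n=2,3$, where the middle block $c_3\cdots c_{n-1}$ is empty; in these cases one should simply verify that the count $\ell(q-\ell)^{n-1}$ still holds directly.
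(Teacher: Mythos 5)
Your proposal is correct and is essentially the paper's own proof in mirror image: the paper applies Construction~\ref{con:new} with $k=n-1$ and $S=I^{n-1}$, obtaining the product code $I^{n-1}\times J$ of size $\ell^{n-1}(q-\ell)$ with $\ell=\lceil (n-1)q/n\rceil$, whereas you take $k=1$ and $S=I$, obtaining $I\times J^{n-1}$ of size $\ell(q-\ell)^{n-1}$ with $\ell=\lfloor q/n\rfloor$. The two codes are related by swapping the roles of $I$ and $J$ (equivalently, reversing words), and the optimisation and limit computation are identical, yielding the same constant $\bigl((n-1)/n\bigr)^{n-1}$.
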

\begin{proof}
The upper bound follows from Theorem~\ref{thm:Levupper}. For the lower bound,
  we use Construction~\ref{con:new} in the special case when
  $k=n-1$ and $S=I^k$. In this case (in the notation of
  Construction~\ref{con:new}) $C$ is the set of words whose first $n-1$
  components lie in $I$, and whose final component lies in $J$. So here
  $|C|=\ell^{n-1}(q-\ell)$.

Let $\ell=\lceil ((n-1)/n)q\rceil$. Since $q-\ell\geq (1/n)q-1$, we
find that
\[
|C|=\frac{1}{n}\left(\frac{n-1}{n}\right)^{n-1}q^n-O(q^{n-1}),
\]
and so the theorem follows.
\end{proof}

The following two theorems provide the precise values of $C(n,q)$ when
$n=2$ and $n=3$ respectively.

\begin{theorem}
\label{thm:n_2}
A largest $q$-ary length $2$ non-overlapping code has $C(2,q)$
codewords, where $C(2,q)=\lfloor q/2\rfloor\, \lceil q/2\rceil$.
\end{theorem}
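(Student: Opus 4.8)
The plan is to prove Theorem~\ref{thm:n_2} in two parts: first establish the upper bound $C(2,q)\leq \lfloor q/2\rfloor \lceil q/2\rceil$, and then give a construction achieving this value. For length~$2$, a word is just a pair $c_1c_2$, and the only non-trivial prefix is $c_1$ and the only non-trivial suffix is $c_2$. So the non-overlapping condition for a code $C\subseteq F^2$ has a very clean reformulation: the set $P=\{c_1 : c_1c_2\in C\}$ of first symbols and the set $Q=\{c_2 : c_1c_2\in C\}$ of second symbols must be disjoint. Indeed, two codewords $u=u_1u_2$ and $v=v_1v_2$ overlap precisely when $u_1=v_2$ or $v_1=u_2$ (taking $u=v$ handles the single-word case $u_1=u_2$), which is exactly the condition $P\cap Q\neq\emptyset$.

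\medskip

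First I would prove the upper bound. Given a non-overlapping code $C$, form $P$ and $Q$ as above; these are disjoint subsets of $F$, so if we set $a=|P|$ and $b=|Q|$ then $a+b\leq q$. Every codeword has its first symbol in $P$ and its second in $Q$, so $C\subseteq P\times Q$ and hence $|C|\leq ab$. Subject to $a+b\leq q$ with $a,b$ non-negative integers, the product $ab$ is maximised by taking $a$ and $b$ as close to equal as possible, giving $ab\leq \lfloor q/2\rfloor\lceil q/2\rceil$. This yields $C(2,q)\leq \lfloor q/2\rfloor\lceil q/2\rceil$.

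\medskip

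Next I would supply a matching construction. Partition $F$ into two disjoint sets $I$ and $J$ with $|I|=\lfloor q/2\rfloor$ and $|J|=\lceil q/2\rceil$, and take $C=I\times J$, the set of all words $c_1c_2$ with $c_1\in I$ and $c_2\in J$. (This is exactly Construction~\ref{con:new} with $n=2$, $k=1$, $\ell=\lfloor q/2\rfloor$, $S=I$; alternatively one checks directly that $P=I$ and $Q=J$ are disjoint, so $C$ is non-overlapping.) Then $|C|=\lfloor q/2\rfloor\lceil q/2\rceil$, matching the upper bound, and so $C(2,q)=\lfloor q/2\rfloor\lceil q/2\rceil$. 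The limiting statement is then immediate: $\lfloor q/2\rfloor\lceil q/2\rceil = (q^2-\varepsilon)/4$ where $\varepsilon\in\{0,1\}$ depending on the parity of $q$, so dividing by $q^2/2$ and letting $q\to\infty$ gives the limit $1/2$.

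\medskip

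I do not anticipate a genuine obstacle here, since length~$2$ collapses the prefix/suffix combinatorics to the single disjointness condition $P\cap Q=\emptyset$. The only points requiring a little care are verifying the reformulation of the non-overlapping condition (in particular remembering to take $u=v$ so that the degenerate overlap $c_1=c_2$ is correctly captured by $P\cap Q\neq\emptyset$), and confirming that the integer optimisation of $ab$ under $a+b\leq q$ indeed has maximum $\lfloor q/2\rfloor\lceil q/2\rceil$ rather than something attained only in the real relaxation.
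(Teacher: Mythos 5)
Your proof is correct and follows essentially the same route as the paper: the upper bound via disjointness of the set of first symbols and the set of last symbols (giving $|C|\leq |I|(q-|I|)\leq \lfloor q/2\rfloor\lceil q/2\rceil$), and the lower bound via Construction~\ref{con:new} with $n=2$, $k=1$, $\ell=\lfloor q/2\rfloor$. Your explicit verification that the non-overlapping condition for length~$2$ reduces exactly to $P\cap Q=\emptyset$ (including the $u=v$ case) is a detail the paper leaves implicit, but it is the same argument.
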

\begin{proof}
Construction~\ref{con:new} in the case $n=2$, $k=1$, $\ell=\lfloor
q/2\rfloor$ and $S=I^k$ provides the lower bound on $C(2,q)$ we require.

Let $C$ be a $q$-ary non-overlapping code of length $q$. Let $I$ be the set
of symbols which occur in the first position of a codeword in $C$, and
let $J$ be the set of symbols that occur in the final position of a
codeword in $C$. Since $C$ is non-overlapping, $I$ and $J$ are
disjoint. Thus
\[
 |C|\leq |I||J|\leq |I|(q-|I|)\leq \lfloor q/2 \rfloor \lceil
 q/2\rceil.
\]
\end{proof}

In the following theorem, $[x]$ denotes the nearest integer to the
real number $x$.

\begin{theorem}
\label{thm:n_3}
A largest $q$-ary length $3$ non-overlapping code has $C(3,q)$
codewords, where $C(3,q)=[2q/3]^2(q-[2q/3])$.
\end{theorem}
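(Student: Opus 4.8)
The plan is to establish both an upper and a lower bound on $C(3,q)$ that agree at the claimed value $[2q/3]^2(q-[2q/3])$.

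The plan is to prove matching lower and upper bounds on $C(3,q)$, both equal to $\max_{0\le \ell\le q}\ell^2(q-\ell)$, and then to identify this maximum with $[2q/3]^2(q-[2q/3])$. For the lower bound I would simply invoke Construction~\ref{con:new} exactly as in the proof of Theorem~\ref{thm:n_fixed}, taking $n=3$, $k=2$ and $S=I^k$, so that the codewords are all words $c_1c_2c_3$ with $c_1,c_2\in I$ and $c_3\in J$. This yields $|C|=\ell^2(q-\ell)$ with $\ell=|I|$, and choosing $\ell=[2q/3]$ gives $C(3,q)\ge [2q/3]^2(q-[2q/3])$.

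The substance of the argument is the matching upper bound. The first step is to note that for length $3$ the non-overlapping condition is equivalent to two clean statements: writing each codeword as $(c_1,c_2,c_3)$, the set $L$ of symbols occurring first and the set $R$ of symbols occurring last are disjoint (this is the length-$1$ prefix/suffix condition), and the set $E_P=\{(c_1,c_2)\}$ of length-$2$ prefixes is disjoint from the set $E_S=\{(c_2,c_3)\}$ of length-$2$ suffixes. Grouping codewords by their middle symbol $m$, and writing $a_m$ and $b_m$ for the number of distinct symbols occurring first, respectively last, among codewords with middle $m$, one obtains $|C|\le\sum_m a_m b_m$, since the codewords with middle $m$ form a subset of a product of size $a_m b_m$.

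Next I would split this sum according to whether the middle symbol lies in $L$, in $R$, or in $N:=F\setminus(L\cup R)$, using $a_m\le|L|$ and $b_m\le|R|$ throughout. The middles in $N$ contribute at most $|N|\,|L|\,|R|$, while the middles in $L$ and in $R$ contribute at most $|L|\,\Sigma_L$ and $|R|\,\Sigma_R$, where $\Sigma_L=\sum_{m\in L}b_m$ and $\Sigma_R=\sum_{m\in R}a_m$. The key point, and the step I expect to be the main obstacle, is to translate the condition $E_P\cap E_S=\emptyset$ into the inequality $\Sigma_L+\Sigma_R\le|L|\,|R|$: here $\Sigma_L$ counts exactly the suffix edges of $E_S$ lying in $L\times R$ and $\Sigma_R$ counts exactly the prefix edges of $E_P$ lying in $L\times R$, so these are disjoint subsets of $L\times R$. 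Granting this, I bound $|L|\,\Sigma_L+|R|\,\Sigma_R\le \max(|L|,|R|)(\Sigma_L+\Sigma_R)\le\max(|L|,|R|)\,|L|\,|R|$, and since $|L|+|R|+|N|=q$ the whole estimate collapses to
\[
|C|\le |L|\,|R|\bigl(q-\min(|L|,|R|)\bigr)\le \ell^2(q-\ell),\qquad \ell:=q-\min(|L|,|R|),
\]
using $|L|\,|R|\le \min(|L|,|R|)\,(q-\min(|L|,|R|))$ for the final inequality. Hence $C(3,q)\le\max_{0\le\ell\le q}\ell^2(q-\ell)$.

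It remains to identify $\max_{0\le\ell\le q}\ell^2(q-\ell)$ with $[2q/3]^2(q-[2q/3])$, which is a purely elementary optimisation of a cubic and which I would treat as routine. The function $f(\ell)=\ell^2(q-\ell)$ has continuous maximiser $2q/3$ and satisfies $f''(\ell)=2q-6\ell<0$ for $\ell>q/3$, so $f$ is concave on an interval containing both integers nearest $2q/3$; since $2q/3$ is never a half-integer, the nearest integer $[2q/3]$ is the integer maximiser. A short case analysis on $q\bmod 3$ confirms this, completing the matching of the two bounds and giving the stated formula for $C(3,q)$; the limiting statement $C(3,q)/(q^2/3)\to 4/9$ then follows at once by substituting $[2q/3]=2q/3+O(1)$.
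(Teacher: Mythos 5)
Your proof is correct, but your upper bound is reached by a genuinely different route from the paper's. The paper assumes $C$ is a \emph{maximal} code and runs a saturation argument: with $I$ the set of first symbols and $J=F\setminus I$, it defines $X\subseteq I\times J$ to be the set of (middle, last) pairs of codewords whose middle symbol lies in $I$, builds an enlarged non-overlapping code $\overline{C}=\overline{C_1}\cup\overline{C_2}$ containing $C$, concludes $C=\overline{C}$ by maximality, and thereby gets the \emph{exact} formula $|C|=|X|(|I|-|J|)+|I||J|^2$, which is linear in $|X|$ and is then optimised over $|X|$ and $|I|$. You instead bound an arbitrary (not necessarily maximal) code directly: grouping codewords by middle symbol, partitioning the alphabet into $L$, $R$, $N$, and exploiting $E_P\cap E_S=\emptyset$ only through the inequality $\Sigma_L+\Sigma_R\le |L||R|$; the chain $|C|\le |L||R|(q-\min(|L|,|R|))\le \ell^2(q-\ell)$ is correct, and both routes land on $\max_{\ell}\ell^2(q-\ell)$. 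Your version buys independence from maximality and from the (asserted but unproved in the paper) check that $\overline{C}$ is non-overlapping; the paper's version buys structural information about what maximal length-$3$ codes look like, which is precisely what motivates Conjecture~\ref{conj_strong}. One caveat: your concavity justification for ``the nearest integer to $2q/3$ maximises $\ell^2(q-\ell)$ over integers'' is not a valid general principle --- a concave function can be larger at the farther of the two integers straddling its maximiser (consider a concave piecewise-linear function with very different slopes on the two sides) --- but the case analysis on $q\bmod 3$ that you also invoke does settle it, and the paper itself states this identity with no proof at all. (Also, both you and the paper write the normalisation as $q^2/3$; for the limit $4/9$ to be correct this must be $q^3/3=q^n/n$, a typo in the statement rather than in your argument.)
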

\begin{proof}
Construction~\ref{con:new} in the case $n=3$, $k=2$, $\ell=[
2q/3]$ and $S=I^k$ provides the lower bound on $C(2,q)$ we require.

Let $C$ be a $q$-ary non-overlapping code of length $q$ of maximal
size. Let $F$ be the underlying alphabet of $C$, so $|F|=q$.

Let $I$ be the set of symbols which occur in the first position
of a codeword in $C$. Let $J$ be the complement of $I$ in
$F$, so $|J|=q-|I|$. Since $C$ is non-overlapping,
the symbols that occur in the final component of any codeword lie in
$J$. So we may write $C$ as a disjoint union $C=C_1\cup C_2$, where
$C_1\subseteq I\times I\times J$ and $C_2\subseteq I\times J\times J$.

Let $X$ be the set of all pairs $(b,c)\in I\times J$ such
that $abc\in C$ for some $a\in I$. Define
\begin{align*}
\overline{C_1}&=\{abc\mid a\in I\text{ and }(b,c)\in X\},\\
\overline{C_2}&=\{bcd\mid (b,c)\in (I\times J)\setminus X\text{ and } d\in J\}.
\end{align*}
Clearly $C_1\subseteq \overline{C_1}$. Moreover, $C_2\subseteq
\overline{C_2}$, since whenever $bcd\in C$ is a codeword, the fact
that $C$ is non-overlapping implies that
$(b,c)\not\in X$. But $\overline{C}=\overline{C_1}\cup\overline{C_2}$
is a non-overlapping code, and so $C=\overline{C}$ as $C$ is maximal.

We have
\[
|C|=|\overline{C}|=|X| |I|+(|I| |J| - |X|)|J|=|X|(|I|-|J|)+|I||J|^2.
\]
If $|I|\leq |J|$, then the maximum value of $|C|$ is achieved when
$|X|=0$, at $\max_{i\in\{1,2,\ldots,\lfloor q/2\rfloor\}}i^2(q-i)$. If
$|I|>|J|$, the maximum value of $|C|$ is achieved when $|X|=|I||J|$,
at $\max_{i\in\{\lfloor q/2\rfloor,\lfloor
  q/2\rfloor+1,\ldots,q-1\}}i^2(q-i)$. Thus
\[
|C|\leq \max_{i\in\{1,2,\ldots,q-1\}}i^2(q-i)= [2q/3]^2(q-[2q/3]),
\]
and so the theorem follows.
\end{proof}

It would be interesting to know if the following conjecture is true:

\begin{conjecture}
\label{conj_strong}
Let $n$ be an integer such that $n\geq 2$. For all sufficiently large
integers $q$, a largest $q$-ary non-overlapping code of length $n$ is
given by Construction~\ref{con:new} in the case $k=n-1$ (and some
value of $\ell$).
\end{conjecture}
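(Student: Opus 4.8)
The plan is to complement the lower bound --- which Theorem~\ref{thm:n_fixed} already supplies via Construction~\ref{con:new} with $k=n-1$, once one replaces the approximate choice $\ell=\lceil((n-1)/n)q\rceil$ by the exact integer maximiser of $\ell^{n-1}(q-\ell)$ --- with a matching upper bound, and then to show that the extremal configuration is forced into the stated shape. Throughout I would fix a largest non-overlapping code $C$ and, as in the proofs of Theorems~\ref{thm:n_2} and~\ref{thm:n_3}, write $I=P_1$ for the set of symbols occurring in the first coordinate of a codeword, $J=F\setminus I$, and $\ell=|I|$. The non-overlapping condition at overlap length $1$ forces the final symbol of every codeword to lie in $J$, so the $I/J$-pattern $\pi(c)\in\{I,J\}^n$ of each codeword begins with $I$ and ends with $J$; the code of Construction~\ref{con:new} with $k=n-1$ and $S=I^{n-1}$ is exactly the set of words carrying the single pattern $I^{n-1}J$.

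The next step is to linearise the problem as $q\to\infty$. The crucial observation is that two words of length $j$ can be equal only if they share the same $I/J$-pattern in $\{I,J\}^j$; hence the disjointness $P_j\cap S_j=\emptyset$ decouples over patterns, and for each $\sigma\in\{I,J\}^j$ it asserts that the length-$j$ prefixes of pattern $\sigma$ and the length-$j$ suffixes of pattern $\sigma$ are disjoint subsets of a set of size $\ell^{a}(q-\ell)^{j-a}$, where $a$ is the number of $I$'s in $\sigma$. Introducing variables recording, for each pattern, how many distinct prefixes and suffixes of that pattern are used (the set $X$ in the proof of Theorem~\ref{thm:n_3} is the prototypical such variable, and the codeword counts are recovered by multiplying these by the number of free coordinates), one obtains a linear program in the $2^{n-2}$ admissible patterns whose optimum upper-bounds the leading term of $|C|/q^n$.

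I would then solve this linear program and argue that its optimum is attained by placing all the mass on a single pattern, which may be taken to be $I^{n-1}J$; this reduces the bound to $\max_\ell \ell^{n-1}(q-\ell)=\tfrac1n\big(\tfrac{n-1}{n}\big)^{n-1}q^n(1+o(1))$, matching the construction. For $n=3$ this is precisely the endpoint analysis in the proof of Theorem~\ref{thm:n_3}, where the objective is linear in $|X|$ and is therefore maximised at a vertex $|X|\in\{0,|I||J|\}$ corresponding to a pure pattern. A completion and maximality argument, generalising the passage from $C$ to $\overline C$ in that proof, would then show that a largest code attaining the bound consists of all words of an extremal pattern; since $I^{n-1}J$ is among the optimal patterns, this yields a largest code of the form required by Construction~\ref{con:new} with $k=n-1$.

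The main obstacle is the extremal analysis of the linear program for general $n$. Unlike the case $n=3$, there are now $2^{n-2}$ competing patterns and the disjointness constraints at the various overlap lengths $j$ interact, so the key content is a rearrangement (exchange) inequality showing that redistributing codewords so as to move each pattern ``closer'' to $I^{n-1}J$ never decreases the admissible total, forcing the optimum to concentrate on a single pattern. A secondary difficulty is upgrading the asymptotic bound to the exact statement of the conjecture, valid for \emph{all} sufficiently large $q$: this requires controlling the $O(q^{n-1})$ terms and the integer optimisation over $\ell$ that the continuous relaxation suppresses. I would keep an induction on $n$ in reserve as an alternative, though the non-overlapping property is not inherited by the obvious ``delete the first coordinate'' operation, so some care would be needed to formulate a workable inductive hypothesis.
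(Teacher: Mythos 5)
The statement you are attempting is not a theorem of the paper at all: it is Conjecture~\ref{conj_strong}, which the author explicitly leaves open (``I believe the following two conjectures are true''). There is therefore no proof in the paper to compare against, and your proposal must be judged as an attack on an open problem. Judged that way, it is a reasonable research plan but not a proof. The decisive step --- that the linear program over the $2^{n-2}$ admissible $I/J$-patterns is optimised by concentrating all mass on the single pattern $I^{n-1}J$, via some exchange or rearrangement inequality --- is left entirely unproven; you yourself label it ``the main obstacle''. But that step \emph{is} the conjecture: everything you do carry out (defining $I$ as the first-position symbols, $J$ as its complement, noting that the last symbol must lie in $J$, and that equal words must share a pattern) is a routine transcription of the paper's proofs of Theorems~\ref{thm:n_2} and~\ref{thm:n_3}, which handle exactly the cases where this concentration phenomenon is easy.

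There are also concrete gaps in the setup itself. For $n=3$ the paper's bound $|C|\leq |X|\,|I|+(|I|\,|J|-|X|)\,|J|$ is linear in the single variable $|X|$ because there is only one nontrivial overlap length ($j=2$) and a length-$3$ word is a length-$2$ prefix plus one free coordinate. For $n\geq 4$ the codeword count is \emph{not} a linear function of ``how many prefixes and suffixes of each pattern are used'': prefixes of different lengths are nested (every length-$3$ prefix extends a length-$2$ prefix), so the disjointness constraints at different overlap lengths $j$ interact, and $|C|$ depends on the joint structure of these sets rather than on their cardinalities alone. You would need to exhibit a linear (or otherwise tractable) relaxation and prove it is asymptotically tight; nothing in the proposal does this. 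Finally, even if the extremal analysis of such a relaxation succeeded, it would deliver only the asymptotic statement, i.e.\ Conjecture~\ref{conj_weak}; the statement at hand is exact --- for every sufficiently large $q$, a largest code literally \emph{is} one of the codes of Construction~\ref{con:new} with $k=n-1$ --- and passing from a leading-term bound to this structural characterisation (controlling the $O(q^{n-1})$ errors, the integrality of $\ell$, and uniqueness of the extremal configuration) is a further unproven step, which you again flag but do not resolve. In short, the proposal organises the problem sensibly but proves nothing beyond what Theorems~\ref{thm:n_fixed}, \ref{thm:n_2} and~\ref{thm:n_3} already give.
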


\section{Good constructions for general parameters}
\label{sec:general_parameters}

This section shows that Construction~\ref{con:new} is always good, in
the sense that it produces non-overlapping codes of cardinality within a
constant factor of the upper bound given by Theorem~\ref{thm:Levupper}
for all parameters. This is implied by the proof of the theorem
below. Up to now, we have only used the special case of
Construction~\ref{con:new} when $S=I^k$. However, in this section we
require more general sets $S$ to avoid `rounding errors' for some sets
of parameters. We mention that the issue of rounding errors also arises in work due to Guibas and Odlyzko~\cite{GuibasOdlyzko} on prefix-synchonized codes. The class of prefix-synchronized codes is not the same as non-overlapping codes: the codes of Construction~\ref{con:Chee}, but not all codes of Construction~\ref{con:new}, are prefix-synchronised. Nevertheless, optimal prefix-synchronized codes are also large (close in size to $q^n/n$) and rounding errors cause interesting behaviour in the constructions of such codes: see Theorem~2 of~\cite{GuibasOdlyzko} and the discussion following its statement.

\begin{theorem}
\label{thm:general}
There exist absolute constants $c_1$ and $c_2$ such that
\[
c_1(q^n/n)\leq C(n,q) \leq c_2(q^n/n)
\]
for all integers $n$ and $q$ with $n\geq 2$ and $q\geq 2$.
\end{theorem}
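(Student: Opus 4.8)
The plan is to take $c_2=1$: by Theorem~\ref{thm:upper} we have $C(n,q) < q^n/(2n-1) \le q^n/n$ since $2n-1 \ge n$, so the upper bound is immediate and all the work lies in the lower bound. For that I would analyse Construction~\ref{con:new} in the clean special case $S = I^k$, where a word is $S$-free precisely when it contains no block of $k$ consecutive symbols all lying in $I$. Counting codewords position by position gives the exact formula $|C| = \ell^k (q-\ell)^2 N$, where $N$ is the number of $S$-free words of length $m := n-k-2$. A union bound over the $m-k+1$ possible windows shows $N \ge q^m\bigl(1 - m(\ell/q)^k\bigr)$, and substituting $\ell^k q^m = (\ell/q)^k q^{n-2}$ collapses everything to
\[
|C| \ge (q-\ell)^2\, q^{n-2}\, p(1-mp), \qquad p := (\ell/q)^k .
\]

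The problem thus reduces to choosing $\ell$ and $k$ so that the factor $(q-\ell)^2$ stays of order $q^2$ while $p = (\ell/q)^k$ is pinned near the maximiser $p = 1/(2m) \approx 1/(2n)$ of $p(1-mp)$. I would split into two regimes according to the size of $q$ relative to $n$. When $q \ge 2n$ the granularity issue disappears, and it is cleanest to use the degenerate case $k = n-1$ from the proof of Theorem~\ref{thm:n_fixed}, where $|C| = \ell^{n-1}(q-\ell)$; taking $\ell = \lceil (n-1)q/n\rceil$ and using $(1-1/n)^{n-1} \ge 1/e$ together with $q - \ell \ge q/(2n)$ yields $|C| \ge q^n/(2en)$. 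When $q < 2n$ I would instead fix $\ell = \lfloor q/2\rfloor$, so that $q-\ell \ge q/2$ and $\rho := \ell/q$ lies in $[1/3,1/2]$, and choose $k$ to be the least integer with $\rho^k \le 1/(2n)$; minimality forces $p = \rho^k > \rho/(2n) \ge 1/(6n)$, whence $1-mp \ge 1/2$ and $p(1-mp) \ge 1/(12n)$, giving $|C| \ge q^n/(48n)$.

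The main obstacle will be precisely this granularity of $p = \rho^k$: because $k$ is an integer the attainable values of $p$ jump by a factor of $\rho$, and I must check that one of them always lands in a window around $1/(2n)$ whose multiplicative width exceeds the step. This is exactly what forces the choice $\ell = \lfloor q/2\rfloor$ (keeping $\rho$ bounded away from $0$) rather than the cruder $\ell = 1$ of Construction~\ref{con:Chee}, whose coarse steps cause the loss of a factor of $q$ seen in Lemma~\ref{lem:easy}. A secondary technical point is the validity constraint $1 \le k \le n-2$ needed for the middle block to exist; the chosen $k$ satisfies this once $n \ge 6$, and the regime $q < 2n$ together with $n \in \{2,3,4,5\}$ comprises only finitely many pairs $(n,q)$, for each of which the single-codeword bound $C(n,q) \ge 1$ (attained by $\{01^{n-1}\}$) suffices after absorbing the resulting constant. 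Collecting the three contributions gives an absolute $c_1 = \min\bigl(1/(2e),\,1/48,\,\ldots\bigr)$, completing the argument.
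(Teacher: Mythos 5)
Your proof is correct, but it takes a genuinely different route from the paper's. The paper establishes the lower bound by contradiction: it supposes a sequence of pairs $(n_i,q_i)$ with $C(n_i,q_i)/(q_i^{n_i}/n_i)\rightarrow 0$, eliminates bounded $q_i$ via Lemma~\ref{lem:easy} and bounded $n_i$ via Theorem~\ref{thm:n_fixed}, and then, when both parameters tend to infinity, applies Construction~\ref{con:new} with $k_i=\lceil\log_2 2n_i\rceil$ and --- this is its key device --- a \emph{proper subset} $S_i\subseteq I_i^{k_i}$ of cardinality exactly $\lfloor q_i^{k_i}/(2n_i)\rfloor$ (with $\ell_i=\lceil s_i^{1/k_i}\rceil$), so that the granularity problem you identify is dodged by prescribing $|S|$ directly rather than forcing $|S|=\ell^k$; asymptotic estimates then give $|C_i|>(1/50)(q_i^{n_i}/n_i)$ for large $i$, a contradiction. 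You instead keep $S=I^k$ throughout and defeat the granularity by pinning $\rho=\ell/q\in[1/3,1/2]$, so the attainable values $\rho^k$ decrease in steps of factor at most $3$ and one of them must land in $(1/(6n),1/(2n)]$; combined with the $k=n-1$ case for $q\geq 2n$ and the finitely many leftover pairs with $n\leq 5$, $q<2n$, this covers all parameters. Your version buys explicitness and effectiveness: $c_2=1$, a concrete $c_1$ (the minimum of $1/(2e)$, $1/48$, and the worst of the finitely many small cases), no passage to subsequences, and no reliance on Lemma~\ref{lem:easy} at all; the paper's version buys a single uniform construction in the main regime and shows why the full generality of Construction~\ref{con:new} (arbitrary $S\subseteq I^k$, not just $S=I^k$) is worth having. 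Two small points to make explicit in a write-up: the product formula $|C|=\ell^k(q-\ell)^2N$ indeed requires $k\leq n-2$ so that positions $k+1$ and $n$ are distinct (you flag this correctly, and your $k\leq 1+\lceil\log_2 n\rceil\leq n-2$ check for $n\geq 6$ goes through); and in the first regime one needs $\ell=\lceil (n-1)q/n\rceil\leq q-1$, which holds since $q\geq 2n$ gives $\lfloor q/n\rfloor\geq 1$.
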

\begin{proof}
  The existence of $c_2$ follows by the upper bound on $C(n,q)$ given
  by Theorem~\ref{thm:Levupper}. Indeed, Theorem~\ref{thm:Levupper} shows that we may take $c_2=\frac{1}{2}$. (If we are only interested in codes of large length then $c_2$ may be taken to be close to $1/e$, where $e$ is the base of the natural logarithm.) We prove the lower bound by showing that
  there exists a constant $c_1$ such that for all choices of $n$ and
  $q$, one of the constructions given by Construction~\ref{con:new}
  contains at least $c_1(q^n/n)$ codewords.

Let $(n_1,q_1),(n_2,q_2),\ldots$ be an infinite sequence of pairs of
integers where $n_i\geq 2$ and $q_i\geq 2$. It suffices to show that
$C(n_i,q_i)/(q_i^{n_i}/n_i)$ is always bounded below by some positive
constant as $i\rightarrow\infty$. Suppose, for a contradiction, that
this is not the case. By passing to a suitable subsequence if
necessary, we may assume that $C(n_i,q_i)/(q_i^{n_i}/n_i)\rightarrow 0$
as $i\rightarrow\infty$.  If the integers $q_i$ are bounded, then
Lemma~\ref{lem:easy} gives a contradiction. If the integers~$n_i$ are
bounded, we again have a contradiction, by
Theorem~\ref{thm:n_fixed}. So we may assume, without loss of
generality, that the integer sequences $(n_i)$ and $(q_i)$ are
unbounded. By passing to a suitable subsequence if necessary, we may
therefore assume that $(n_i)$ and $(q_i) $ are strictly increasing
sequences (and that $n_i$ and $q_i$ are sufficiently large for our
purposes below). In particular, we may assume that
$n_i\rightarrow\infty$ and $q_i\rightarrow\infty$ as
$i\rightarrow\infty$.

Let $k_i=\lceil \log_2 2n_i\rceil$, and set $s_i=\lfloor
q_i^{k_i}/(2n_i)\rfloor$. Let $F_i$ be a set of size $q_i$. Let
$I_i\subseteq F_i$ have cardinality $\ell_i$, where $\ell_i=\lceil
s_i^{1/k_i}\rceil$. Let $J_i$ be the complement of $I_i$ in $F_i$. Let
$S_i$ be a subset of $I_i^{k_i}$ of cardinality $s_i$. Note that such
a set $S_i$ exists, by our choice of $\ell_i$.

Let $C_i$ be the $q_i$-ary non-overlapping code of length $n_i$ given
by Construction~\ref{con:new} in the case $k=k_i$, $\ell=\ell_i$,
$I=I_i$, $J=J_i$ and $S=S_i$. Then
\begin{equation}
\label{eqn:C}
|C_i|=|S|(q_i-\ell_i)^2f_i
\end{equation}
where $f_i$ is the number of $S$-free sequences of length
$n_i-k_i-2$. We now aim to find a lower bound on $|C_i|$.

Since $q_i\rightarrow\infty$ as $i\rightarrow\infty$, we
see that
\[
q_i^{k_i}/(2n_i)\geq
q_i^{\log_2(2n_i)}/(2n_i)=2^{(\log_2(q_i)-1)(\log_2(2n_i)}\rightarrow\infty.
\]
Hence
\begin{equation}
\label{eqn:S}
|S|\sim q_i^{k_i}/(2n_i)
\end{equation}
as $i\rightarrow\infty$.

Note that
\[
(2n_i)^{(1/k_i)}\geq 2^{\log_2 (2n_i)/2\log_2(2n_i)}=2^{1/2},
\]
and hence
\[
s_i^{1/k_i}\leq \left(\frac{q_i^{k_i}}{2n_i}\right)^{1/k_i}
\leq 2^{-1/2}q_i.
\]
Since $(1-2^{-1/2})^2>(1/12)$, we see that
\begin{equation}
\label{eqn:ell}
(q_i-\ell_i)^2>(1/12)q_i^2
\end{equation}
for all sufficiently large $i$.

The number of $S$-free $q$-ary sequences of length $r$ is at least
$q^r-(r-k+1)|S|q^{r-k}$, since every word that is not $S$-free
must contain an element of $S$ somewhere as a subword. So the
number of $S$-free $q$-ary sequences of length $r$ is at least
$q^r-r|S|q^{r-k}=q^{r}(1-r|S|q^{-k})$.  Thus
\begin{equation}
\label{eqn:f}
\begin{split}
f_i&\geq q_i^{n_i-k_i-2}(1-(n_i-k_i-2)|S_i|q_i^{-k_i}\\
&\geq \frac{1}{2}q_i^{n_i-k_i-2}(2-2n_i|S_i|q_i^{-k_i})\\
&\sim \frac{1}{2}q_i^{n_i-k_i-2},
\end{split}
\end{equation}
the last step following from~\eqref{eqn:S}.

Now \eqref{eqn:S},~\eqref{eqn:ell} and~\eqref{eqn:f} combine with
\eqref{eqn:C} to show that $|C_i|> (1/50)(q_i^{n_i}/n_i)$ for all
sufficiently large $i$. This
contradiction completes the proof of the theorem.
\end{proof}

\paragraph{Acknowledgement} The author thanks the Associate Editor and the reviewers of this paper for their useful comments that have improved this paper.

\end{document}